\documentclass[11pt]{article}

\usepackage[a4paper,margin=1in]{geometry}
\usepackage{amsmath,amssymb,amsthm,mathtools}
\usepackage{graphicx}
\usepackage{hyperref}
\usepackage{microtype}
\usepackage{enumitem}
\usepackage{tikz} %
\usetikzlibrary{calc,arrows.meta}

\newtheorem{theorem}{Theorem}[section]

\newtheorem{lemma}[theorem]{Lemma}

\theoremstyle{definition}

\newtheorem{remark}[theorem]{Remark}

\title{On Triangulations Generated by the Largest-Angle $n$-Section Algorithm}
\author{Jérôme Michaud and Sergey Korotov}
\date{\today}

\begin{document}

\maketitle
\begin{center}
Department of Business and Mathematics, IEM, Mälardalen University, Västerås, Sweden \newline
emails: jerome.michaud@mdu.se,  sergey.korotov@mdu.se 
\end{center}

\begin{abstract}
We define a mesh refinement algorithm based on the rule of dividing the largest angles of  triangular elements of planar partitions in focus into $n$ equal parts, and analyse the (geometric) properties of triangulations generated by this technique.  This largest-angle $n$-section rule is compared with the classical longest-edge $n$-section rule, where it is the longest edges which are split into $n$ equal parts.  The longest-edge bisection and trisection are known to produce nondegenerate triangulations (possibly with hanging nodes), but the longest-edge $n$-sections with $n\geq 4$ always produce (infinite) sequences of triangles with minimum angles tending to zero (moreover, their relevant maximum angles tend to $\pi$), thus breaking the minimum and maximum angle conditions.  We show that this degeneration effect is not a consequence of $n$-section itself.  For every $n\geq 2$, the largest-angle $n$-sections produce partitions satisfying the minimum angle condition (and, therefore, the  maximum angle condition).  More precisely, if the initial triangle has its smallest angle $\gamma_0>0$, then all descendant triangles have angles bounded below by
\(
        m_n=\min\left\{\gamma_0,\frac{\pi}{3n}\right\},
\)
and, correspondingly, bounded above by $\pi-2m_n<\pi$.  We also show that the recursive largest-angle $n$-section algorithm always produces a family of triangular partitions, i.e. the maximum diameter of level-$k$ descendants tends to zero as $k \to \infty$.
\end{abstract}

\medskip

\noindent{\bf Keywords:} largest-angle $n$-section, triangulation, refinement technique, minimum/maximum angle conditions

\medskip

\noindent{\bf Mathematics Subject Classification:} 65N50, 65M50

\section{Introduction}

Mesh refinement procedures are central in various adaptive finite element computations.  A basic quality requirement is that the refinement process should not create arbitrarily degenerating mesh elements.  In two dimensions, this is often expressed through the minimum angle condition: there should be a positive lower bound for all angles appearing in the refinement process.  A weaker, but still important, requirement is the maximum angle condition, which requires that all maximum angles produced remain bounded away from $\pi$.

The longest-edge bisection procedure is the classical example of a stable refinement rule.  Rosenberg and Stenger proved that longest-edge bisection preserves a positive lower bound on angles in terms of the minimum angle of the initial triangle \cite{RosenbergStenger1975}.  Longest-edge trisection is also known to be nondegenerate; Plaza, Suarez, Padron, and Falcon proved a positive minimum-angle bound for the longest-edge trisection \cite{PlazaSuarezPadronFalcon2010}.  The behavior changes for higher sections.  In fact, Suarez, Moreno, Abad, and Plaza proved that longest-edge $n$-sections produce degenerating triangles for $n\geq 4$, in the sense that one can generate triangles whose minimum angles tend to zero \cite{SuarezMorenoAbadPlaza2012}.  Later, Perdomo and Plaza presented a shorter proof of the same degeneracy phenomenon using a complex-variable normalization of triangle shapes \cite{PerdomoPlaza2012}.  Korotov, Plaza, and Suarez showed that the conforming longest-edge $n$-section algorithm violates even the weaker maximum angle condition for $n\geq 4$ \cite{KorotovPlazaSuarez2015}.

These results suggest a natural question: is the degeneracy for large $n$ caused by the act of dividing into many parts, or by the fact that the divided object is the longest edge?  The purpose of this paper is to show that the latter is the decisive feature.  If, instead of dividing the longest edge, one divides the largest angle, then a simple regularity mechanism holds for every $n\geq 2$.

The largest-angle bisection was recently studied by Ismailescu, Kim, Kim, and Lee, who proved, among other properties, that the procedure has the expected non-degeneracy behavior \cite{IsmailescuKimKimLee2019}.  The present paper records the observation that the same regularity mechanism extends directly to largest-angle $n$-section for every $n\geq 2$.  The key point is that the (largest) angle being divided is always at least $\pi/3$.  Hence each newly created angle is at least $\pi/(3n)$.

  In Section~\ref{sec:definition}, we define the largest-angle $n$-section directly in terms of angle triples.  Section~\ref{sec:main} contains the proofs for validity of the minimum/maximum angle conditions, and for the convergence of element diameters to zero.  Section~\ref{sec:conclusion} gives a brief conclusion.

\section{Largest-angle $n$-section rule}\label{sec:definition}

Let $T$ be a nondegenerate triangle with angles
\[
        \alpha,\beta,\gamma>0,
        \qquad
        \alpha+\beta+\gamma=\pi.
\]
A largest angle is any one of the angles equal to
\[
        \max\{\alpha,\beta,\gamma\}.
\]
If there is more than one largest angle, one of them is chosen arbitrarily.  This choice will not affect the estimates below.

Suppose, after relabelling the vertices of $T$ if necessary, that
\[
        \alpha=\max\{\alpha,\beta,\gamma\}.
\]
The largest-angle $n$-section of $T$ is obtained by dividing the angle $\alpha$ into $n$ equal parts and joining the corresponding division rays to the opposite side.  This produces $n$ child triangles.  Reading off the angles of these children gives the following angle triples:
\begin{equation}\label{eq:child-angle-triples}
        T_j=
        \left(
        \frac{\alpha}{n},
        \beta+\frac{j\alpha}{n},
        \gamma+\frac{(n-1-j)\alpha}{n}
        \right),
        \qquad
        j=0,1,\ldots,n-1.
\end{equation}
Indeed, the $j$th child contains one sector angle $\alpha/n$, receives $j$ adjacent sectors on the side of the angle $\beta$, and receives the remaining $n-1-j$ adjacent sectors on the side of the angle $\gamma$.  The sum of the three entries in \eqref{eq:child-angle-triples} is
\[
        \frac{\alpha}{n}
        +\beta+\frac{j\alpha}{n}
        +\gamma+\frac{(n-1-j)\alpha}{n}
        =
        \alpha+\beta+\gamma
        =
        \pi,
\]
so each triple is really the angle triple of a triangle, see figure \ref{fig:LA-n-section}.

\begin{figure}[t] 
\centering 
\includegraphics[width=\textwidth]{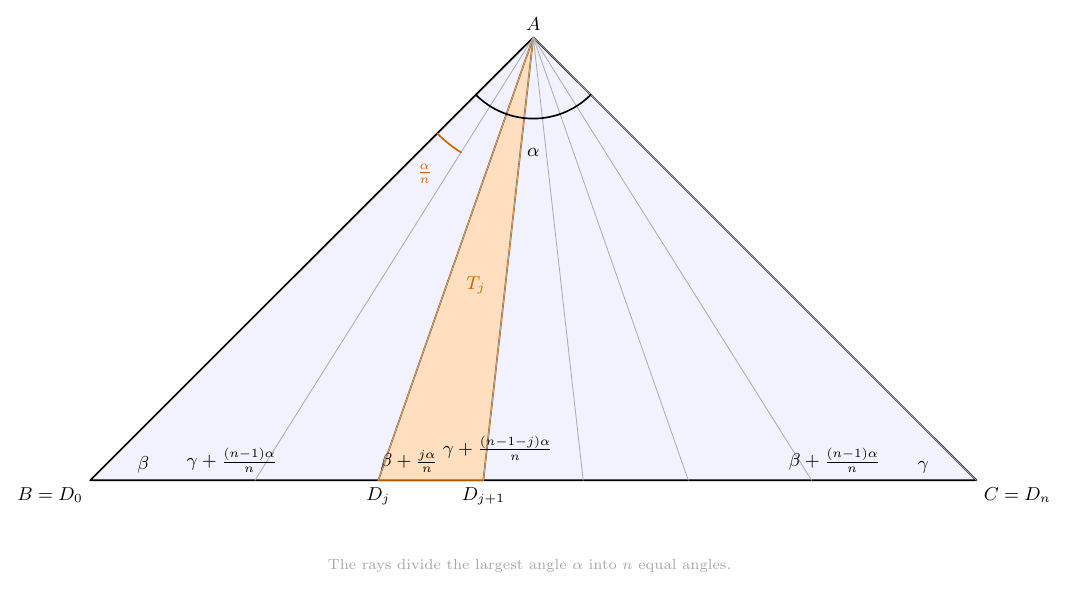}
 
\caption{The largest-angle $n$-section of the triangle $ABC$, whose largest angle $\alpha$ is divided into $n$ equal parts. The child triangle $T_j$, between the rays $AD_j$ and $AD_{j+1}$, has an angle triple \( \left( \frac{\alpha}{n}, \beta+\frac{j\alpha}{n}, \gamma+\frac{(n-1-j)\alpha}{n} \right). \)} \label{fig:LA-n-section} 
\end{figure}

Starting from a single triangle $T_0$, recursive largest-angle $n$-section means applying the above rule to every triangle in the current generation.  We denote by $\mathcal T_k$ the set of $n^k$ triangles obtained after $k$ generations.  More generally, if one starts from a finite collection of initial triangles (initial triangulation), the same estimates apply to all descendants of all initial triangles.  No conformity assertion for global triangular partitions is needed in this note; the estimates are purely element-wise.

\section{Regularity and diameter convergence properties}\label{sec:main}

We now prove the main result, which contrasts sharply with the known behavior of the longest-edge $n$-section for $n\geq 4$.

\begin{theorem}[Regularity of the largest-angle $n$-section]\label{thm:main}
Let $n\geq 2$, and let $T_0$ be a nondegenerate triangle with the smallest angle $\gamma_0>0$.  Let $\mathcal T_k$ be the set of descendants ($k$-th generation) obtained after $k$ steps of recursive largest-angle $n$-section.  Define
\begin{equation}\label{eq:mn}
        m_n
        =
        \min\left\{\gamma_0,\frac{\pi}{3n}\right\}.
\end{equation}
Then every angle of every triangle in every generation produced satisfies
\begin{equation}\label{eq:min-angle-condition}
        \theta\geq m_n.
\end{equation}
Consequently, any ever produced angle satisfies also the following upper bound
\begin{equation}\label{eq:max-angle-condition}
        \theta \leq \pi-2m_n<\pi.
\end{equation}
Thus, the largest-angle $n$-section algorithm always produces sets of triangles satisfying both, the minimum angle condition and the maximum angle condition.
\label{th:LAB-n-regularity}
\end{theorem}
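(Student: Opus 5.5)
We argue by induction on the generation $k$, with the hypothesis that every angle of every triangle in $\mathcal T_k$ is at least $m_n$. For $k=0$ this holds because every angle of $T_0$ is at least its smallest angle $\gamma_0\geq m_n$. For the inductive step, take a triangle $T\in\mathcal T_k$ with angles $\alpha,\beta,\gamma$ all at least $m_n$. Relabel so that $\alpha=\max\{\alpha,\beta,\gamma\}$. We then read off the angles of its children directly from the triples \eqref{eq:child-angle-triples}.

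The key observation is that a largest angle always satisfies $\alpha\geq\pi/3$. Indeed, $3\alpha\geq\alpha+\beta+\gamma=\pi$. Hence the sector angle of every child obeys
\[
        \frac{\alpha}{n}\geq\frac{\pi}{3n}\geq m_n .
\]
The other two entries of $T_j$ are $\beta+j\alpha/n$ and $\gamma+(n-1-j)\alpha/n$. Since $\alpha>0$ and $0\leq j\leq n-1$, these are at least $\beta$ and $\gamma$ respectively, and hence at least $m_n$ by the inductive hypothesis. This settles the case of a unique largest angle. If there are ties, the choice of which largest angle to divide does not matter, because the argument uses only that the divided angle is maximal. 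This completes the induction and proves \eqref{eq:min-angle-condition}.

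For \eqref{eq:max-angle-condition}, let $\theta$ be any angle of a produced triangle. The other two angles of that triangle are each at least $m_n$ by \eqref{eq:min-angle-condition}. Since the three angles sum to $\pi$, we get $\theta=\pi-(\text{sum of the other two})\leq\pi-2m_n$. Finally, $m_n>0$ because $\gamma_0>0$ and $n$ is finite, so $\pi-2m_n<\pi$. I expect no real obstacle here. The only point needing care is the observation $\alpha\geq\pi/3$, which is exactly what makes the bound independent of the generation and hence uniform over all $k$.
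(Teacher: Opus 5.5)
Your proof is correct and follows the paper's argument: induction over generations, using $\alpha\geq\pi/3$ to get $\alpha/n\geq\pi/(3n)\geq m_n$ while the other two child angles are at least $\beta$ and $\gamma$, then obtaining $\theta\leq\pi-2m_n$ from the angle sum. Your remarks on ties and on $m_n>0$ also match the paper's treatment.
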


\begin{proof}
It is enough to prove that the lower bound is preserved from one generation to the next.

Let $T$ be a parent triangle whose angles are $\alpha,\beta,\gamma$, and suppose that
\[
        \alpha=\max\{\alpha,\beta,\gamma\}.
\]
Since $\alpha$ is the largest angle of a triangle,
\begin{equation}\label{eq:largest-at-least-pi3}
        \alpha\geq \frac{\pi}{3}.
\end{equation}
The angles of the $j$-th child $T_j$ are given by \eqref{eq:child-angle-triples}.  Hence every angle of this child is bounded below by
\[
        \min\left\{
        \frac{\alpha}{n},
        \beta,
        \gamma
        \right\}.
\]
Using \eqref{eq:largest-at-least-pi3}, we get
\[
        \frac{\alpha}{n}
        \geq
        \frac{\pi}{3n}.
\]
Therefore every angle of every child is at least
\[
        \min\left\{\beta,\gamma,\frac{\pi}{3n}\right\}.
\]
In particular, if every angle of the parent is at least $m_n$, then every angle of every child is also at least $m_n$, because $m_n\leq \pi/(3n)$.

The initial triangle has all angles at least $\gamma_0$, and by definition $m_n\leq \gamma_0$.  Thus the bound \eqref{eq:min-angle-condition} follows by induction over the generations.

Finally, if all three angles of a triangle are at least $m_n$, then the largest angle is at most
\[
        \pi-m_n-m_n
        =
        \pi-2m_n.
\]
Since $m_n>0$, this gives \eqref{eq:max-angle-condition}.
\end{proof}

\begin{remark}
The lower bound \eqref{eq:mn} depends on the initial triangle and on $n$.  It is not uniform over all triangles in the initial triangulation as the initial angles $\gamma_0$ can vary considerably.  It is, however, uniform over the entire refinement tree generated from a fixed nondegenerate initial triangle.
\end{remark}

We shall now prove that recursive largest-angle \(n\)-section produces a (refining) family of triangular partitions, in the sense that the maximum diameter of the level-\(k\) descendants tends to zero. The proof uses an explicit area contraction estimate. \begin{lemma}[Area ratios for largest-angle \(n\)-section] Let \(T = ABC\) be a triangle with angles \[ \alpha,\beta,\gamma>0, \qquad \alpha+\beta+\gamma=\pi, \] and suppose that \[ \alpha=\max\{\alpha,\beta,\gamma\}. \] Let \(T_0,\ldots,T_{n-1}\) be the \(n\) children triangles obtained after dividing \(\alpha\) into \(n\) equal parts. Assume moreover that all angles of \(T\) are at least \(m>0\). Then every child $T_j \ (j = 0, \dots , n-1)$ satisfies \[ \frac{\operatorname{area}(T_j)}{\operatorname{area}(T)} \geq \delta_{m,n}, \qquad \delta_{m,n} := \frac{\pi}{3n}\sin^2 m . \] Consequently, \[ \operatorname{area}(T_j) \leq q_{m,n}\operatorname{area}(T), \qquad q_{m,n} := 1-(n-1)\frac{\pi}{3n}\sin^2 m <1 . \] 
\label{lm:area-ratios}
\end{lemma}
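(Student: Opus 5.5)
The plan is to write each child's area in terms of the two cevians through $A$ that bound it, bound those cevians below by the altitude from $A$, and then reduce everything to a one-variable trigonometric inequality in $\alpha$.

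\emph{Area formula.} Write $b=|AC|$, $c=|AB|$ and $a=|BC|$, and let $D_0=B, D_1,\ldots,D_n=C$ be the points where the division rays meet $BC$, as in Figure~\ref{fig:LA-n-section}. The child $T_j$ has vertex $A$, angle $\alpha/n$ at $A$, and sides $AD_j$, $AD_{j+1}$. Hence
\[
\operatorname{area}(T_j)=\tfrac12 |AD_j|\,|AD_{j+1}|\sin\frac{\alpha}{n},
\qquad
\operatorname{area}(T)=\tfrac12\, bc\sin\alpha .
\]

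\emph{Cevians versus the altitude.} Every segment from $A$ to a point of the line $BC$ is at least as long as the altitude $h$ from $A$. Moreover $h=b\sin\gamma=c\sin\beta$. Therefore
\[
\frac{\operatorname{area}(T_j)}{\operatorname{area}(T)}
\;\ge\;\frac{h^2\sin(\alpha/n)}{bc\sin\alpha}
\;=\;\sin\beta\,\sin\gamma\,\frac{\sin(\alpha/n)}{\sin\alpha}.
\]
Since $\alpha$ is the largest angle, $\beta\le\alpha$ and $\alpha+\beta<\pi$, so $\beta<\pi/2$; likewise $\gamma<\pi/2$. Combined with $\beta,\gamma\ge m$ and the monotonicity of $\sin$ on $[0,\pi/2]$, this gives $\sin\beta\,\sin\gamma\ge\sin^2 m$.

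\emph{The trigonometric factor (the main obstacle).} It remains to show $\sin(\alpha/n)/\sin\alpha\ge \pi/(3n)$ for $\alpha\in[\pi/3,\pi)$. The cruder estimate $\sin x\ge 2x/\pi$ only yields $2/(3n)$, which is too weak, so I would argue by monotonicity. Set $f(\alpha)=\sin(\alpha/n)/\sin\alpha$. The sign of $f'$ is that of
\[
\tfrac1n\cos(\alpha/n)\sin\alpha-\sin(\alpha/n)\cos\alpha .
\]
For $\alpha>\pi/2$ this is positive, since $\cos\alpha<0$ while every other factor is nonnegative. For $\alpha\le\pi/2$, positivity is equivalent to $n\tan(\alpha/n)\le\tan\alpha$. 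This holds because $\tan$ is convex on $[0,\pi/2)$ with $\tan 0=0$, so $\tan(x)/x$ is nondecreasing there. Thus $f$ is increasing on $(0,\pi)$, and its minimum on $[\pi/3,\pi)$ is
\[
f(\pi/3)=\frac{2}{\sqrt3}\,\sin\frac{\pi}{3n}.
\]
Put $x=\pi/(3n)\le\pi/6$. Since $\sin x/x$ is decreasing on $(0,\pi/2]$, we get $\sin x/x\ge \sin(\pi/6)/(\pi/6)=3/\pi$. As $3/\pi>\sqrt3/2$, it follows that $\frac{2}{\sqrt3}\sin x\ge x$. This proves $f(\alpha)\ge\pi/(3n)$, and hence the ratio bound $\delta_{m,n}$.

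\emph{Consequence.} The children tile $T$, so $\sum_i \operatorname{area}(T_i)=\operatorname{area}(T)$. Applying the lower bound to the $n-1$ children other than $T_j$ gives
\[
\operatorname{area}(T_j)\le\bigl(1-(n-1)\delta_{m,n}\bigr)\operatorname{area}(T)=q_{m,n}\operatorname{area}(T).
\]
Finally, $q_{m,n}<1$ because $n\ge2$ and $m>0$ make $\delta_{m,n}>0$.
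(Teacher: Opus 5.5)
Your proof is correct and reaches exactly the constant $\delta_{m,n}$, but by a different route from the paper.

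The paper parametrizes the foot of the ray at angle $\theta$ by $F(\theta)=|BD(\theta)|/|BC|$. It computes $F'(\theta)=\sin\beta\sin\gamma/(\sin\alpha\,\sin^2(\beta+\theta))\geq\sin^2 m$ and integrates over an interval of length $\alpha/n\geq\pi/(3n)$. Underneath, both arguments use the same fact. Here $F'(\theta)$ is the normalized polar area element $|AD(\theta)|^2/(bc\sin\alpha)$, with $|AD(\theta)|=h/\sin(\beta+\theta)\geq h$ and $h^2/(bc)=\sin\beta\sin\gamma$.

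The difference is how the angle enters. Integrating makes the angular factor linear, namely $\alpha/n$. The paper can therefore just use $\sin\alpha\leq 1$ and $\alpha\geq\pi/3$ with no further trigonometry. Your side-angle-side formula produces $\sin(\alpha/n)$ instead, and $\sin(\pi/(3n))<\pi/(3n)$. So you are forced to keep the factor $1/\sin\alpha$, prove that $\sin(\alpha/n)/\sin\alpha$ is nondecreasing, and use the numerical fact $2\sqrt{3}>\pi$. You do this correctly. The only blemishes are cosmetic: at $\alpha=\pi/2$ the tangent reformulation is undefined, though the numerator there equals $\frac{1}{n}\cos\frac{\pi}{2n}>0$; and ``$\leq$'' gives nonnegativity rather than positivity of $f'$, which is all you need.

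Your route avoids computing and differentiating $F$. The paper's route makes the constant fall out directly, with no delicate numerical margin. You could have had both advantages. The circular sector of radius $h$ centred at $A$ between the rays $AD_j$ and $AD_{j+1}$ lies inside $T_j$, because the disc of radius $h$ about $A$ stays on $A$'s side of the line $BC$. Hence $\operatorname{area}(T_j)\geq\frac{1}{2}h^2\,\alpha/n$, which is the paper's bound obtained without calculus.
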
 
\begin{proof} Let the largest angle \(\alpha\) be located at the vertex \(A\), and let the opposite side be \(BC\). Write the remaining angles as \(\beta\) at \(B\) and \(\gamma\) at \(C\). The \(n\)-section rays from \(A\) meet \(BC\) at points \[ D_0=B,\ D_1,\ldots,D_{n-1},\ D_n=C. \] The child \(T_j\) is the triangle \(AD_jD_{j+1}\). For \(0\leq \theta\leq \alpha\), let \(D(\theta)\) be the point where the ray from \(A\) making angle \(\theta\) with \(AB\) meets \(BC\). Define \[ F(\theta) = \frac{|BD(\theta)|}{|BC|}. \] Since all children have the same altitude from \(A\) to the line \(BC\), the area ratio of the \(j\)-th child is \[ \frac{\operatorname{area}(T_j)}{\operatorname{area}(T)} = F\left(\frac{(j+1)\alpha}{n}\right) - F\left(\frac{j\alpha}{n}\right). \] We compute \(F\). By the sine rule in the triangle \(ABD(\theta)\), and then in the parent triangle \(ABC\), \[ \frac{|BD(\theta)|}{\sin\theta} = \frac{|AB|}{\sin(\beta+\theta)}, \qquad |AB| = |BC|\frac{\sin\gamma}{\sin\alpha}. \] Hence \[ F(\theta) = \frac{\sin\gamma\,\sin\theta} {\sin\alpha\,\sin(\beta+\theta)}. \] Differentiating gives \[ F'(\theta) = \frac{\sin\beta\,\sin\gamma} {\sin\alpha\,\sin^2(\beta+\theta)}. \] Since all angles of \(T\) are at least \(m\), and since \(\alpha\) is the largest angle, the two smaller angles \(\beta\) and \(\gamma\) lie in \([m,\pi/2]\). Therefore \[ \sin\beta\geq \sin m, \qquad \sin\gamma\geq \sin m. \] Also \(\sin\alpha\leq 1\) and \(\sin^2(\beta+\theta)\leq 1\). Thus \[ F'(\theta)\geq \sin^2 m \qquad \text{for }0\leq \theta\leq \alpha. \] It follows that \[ \frac{\operatorname{area}(T_j)}{\operatorname{area}(T)} = \int_{j\alpha/n}^{(j+1)\alpha/n} F'(\theta)\,d\theta \geq \frac{\alpha}{n}\sin^2 m. \] Since \(\alpha\) is the largest angle, \(\alpha\geq \pi/3\), and therefore \[ \frac{\operatorname{area}(T_j)}{\operatorname{area}(T)} \geq \frac{\pi}{3n}\sin^2 m = \delta_{m,n}. \] The \(n\) child area ratios sum to \(1\). Since every child has area ratio at least \(\delta_{m,n}\), any one child has area ratio at most \[ 1-(n-1)\delta_{m,n}. \] Thus, \[ \operatorname{area}(T_j) \leq \left(1-(n-1)\frac{\pi}{3n}\sin^2 m\right) \operatorname{area}(T). \] This proves the claim. \end{proof} \begin{lemma}[Area controls diameter under the minimum angle condition] Let \(T\) be a triangle whose angles are all at least \(m>0\). Then \[ \operatorname{area}(T) \geq \frac12\,\operatorname{diam}(T)^2\sin^2 m. \] 
\label{lm:area-controls}
\end{lemma}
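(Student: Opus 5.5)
The diameter of a triangle equals the length of its longest side, so the idea is to write the area using that side and an adjacent side, and then bound the adjacent side from below with the sine rule. Label $T=ABC$ so that $a=|BC|=\operatorname{diam}(T)$ is the longest side. Then the opposite angle $\alpha$ is the largest angle. Using the two sides meeting at $B$,
\[
\operatorname{area}(T)=\tfrac12\,a\,c\,\sin\beta ,
\]
where $c=|AB|$ and $\beta$ is the angle at $B$.

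The first step is to bound $c$ from below by $a$. The sine rule gives $c = a\,\sin\gamma/\sin\alpha$. Since $\sin\alpha\le 1$, we get $c\ge a\sin\gamma$.

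The second step is to bound $\sin\beta$ and $\sin\gamma$ from below by $\sin m$. Here I use the same observation as in the proof of Lemma~\ref{lm:area-ratios}. Because $\alpha$ is the largest angle, $\beta+\gamma\le 2\pi/3$ and $\beta,\gamma\le\alpha$. Hence $\beta$ and $\gamma$ cannot exceed $\pi/2$: if, say, $\beta>\pi/2$, then $\alpha\ge\beta>\pi/2$ and the angle sum would exceed $\pi$. So $\beta,\gamma\in[m,\pi/2]$. Since $\sin$ is increasing on $[0,\pi/2]$, this gives $\sin\beta\ge\sin m$ and $\sin\gamma\ge\sin m$.

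Combining the two steps,
\[
\operatorname{area}(T)=\tfrac12\,a\,c\,\sin\beta\ \ge\ \tfrac12\,a\,(a\sin\gamma)\sin\beta\ \ge\ \tfrac12\,\operatorname{diam}(T)^2\sin^2 m ,
\]
which is the claimed bound.

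The only delicate point is the monotonicity of $\sin$. The two angles adjacent to the longest side must be shown to lie in $[m,\pi/2]$, where $\sin$ is increasing, rather than merely being at least $m$. The largest angle $\alpha$ may be obtuse, which is why it is bounded only trivially by $\sin\alpha\le1$ and never bounded below. Beyond this, the argument is routine.
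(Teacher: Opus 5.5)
Your proof is correct and is essentially the paper's argument: substituting $c=a\sin\gamma/\sin\alpha$ turns your $\tfrac12 ac\sin\beta$ into the paper's $\tfrac12 h^2\sin\Phi\sin\Psi/\sin\Theta$, which the paper bounds using $\sin\Theta\le 1$ and $\sin\Phi,\sin\Psi\ge\sin m$. Your explicit check that the two angles adjacent to the longest side lie in $[m,\pi/2]$ fills in a step the paper leaves implicit.
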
 \begin{proof} Let \(h=\operatorname{diam}(T)\) be the length of the longest side of \(T\), and let \(\Theta\) be the angle opposite this side. Let the two remaining angles be \(\Phi\) and \(\Psi\). By the sine rule, the two sides adjacent to \(\Theta\) have lengths \[ h\frac{\sin\Phi}{\sin\Theta} \qquad\text{and}\qquad h\frac{\sin\Psi}{\sin\Theta}. \] Hence \[ \operatorname{area}(T) = \frac12 h^2 \frac{\sin\Phi\sin\Psi}{\sin\Theta}. \] Since \(\Phi,\Psi\geq m\) and \(\sin\Theta\leq 1\), we obtain \[ \operatorname{area}(T) \geq \frac12 h^2\sin^2 m. \] \end{proof} \begin{theorem}[Diameter convergence] Let \(n\geq 2\), and let \(T_0\) be a nondegenerate initial triangle with smallest angle \(\gamma_0>0\). Let \(\mathcal T_k\) denote the family of triangles obtained after \(k\) generations of recursive largest-angle \(n\)-section. Set \[ m_n = \min\left\{\gamma_0,\frac{\pi}{3n}\right\} \] and \[ q_n = 1-(n-1)\frac{\pi}{3n}\sin^2 m_n. \] Then \(q_n<1\), and every \(T\in\mathcal T_k\) satisfies \[ \operatorname{diam}(T) \leq \frac{\sqrt{2\,\operatorname{area}(T_0)}}{\sin m_n} q_n^{k/2}. \] In particular, \[ \max_{T\in\mathcal T_k}\operatorname{diam}(T) \longrightarrow 0 \qquad \text{as } k\to\infty. \] \end{theorem}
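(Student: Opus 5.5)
The proof combines the three earlier results: Theorem~\ref{thm:main} for the uniform angle bound, Lemma~\ref{lm:area-ratios} for the geometric decay of areas, and Lemma~\ref{lm:area-controls} to turn an area bound into a diameter bound.

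First I show that $q_n<1$. Since $m_n>0$ and $m_n\le \pi/(3n)<\pi/2$, we have $\sin m_n>0$. Because $n\ge 2$, the subtracted term $(n-1)\frac{\pi}{3n}\sin^2 m_n$ is strictly positive, so $q_n<1$. We also need $q_n>0$, so that $q_n^{k/2}$ makes sense. This holds because $(n-1)\frac{\pi}{3n}<\frac{\pi}{3}\approx 1.047$ and $\sin^2 m_n\le \sin^2(\pi/6)=1/4$, which makes the subtracted term less than $1$. Alternatively, Lemma~\ref{lm:area-ratios} itself shows $q_n\ge\delta_{m_n,n}>0$, since each child has area ratio both at least $\delta_{m_n,n}$ and at most $q_n$.

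Next comes the area decay, proved by induction on $k$: every $T\in\mathcal T_k$ satisfies $\operatorname{area}(T)\le q_n^k\operatorname{area}(T_0)$. The case $k=0$ is trivial. For the inductive step, take $T\in\mathcal T_k$ with $k\ge1$, and let $P\in\mathcal T_{k-1}$ be its parent. By Theorem~\ref{thm:main}, all angles of $P$ are at least $m_n$. Lemma~\ref{lm:area-ratios} with $m=m_n$ then gives
\[
\operatorname{area}(T)\le q_{m_n,n}\operatorname{area}(P)=q_n\operatorname{area}(P).
\]
Combining this with the inductive hypothesis for $P$ gives the bound for $T$.

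Finally, I convert area into diameter. By Theorem~\ref{thm:main}, every angle of $T$ is also at least $m_n$, so Lemma~\ref{lm:area-controls} applies:
\[
\tfrac12\operatorname{diam}(T)^2\sin^2 m_n\le \operatorname{area}(T)\le q_n^k\operatorname{area}(T_0).
\]
Solving for $\operatorname{diam}(T)$ gives exactly the stated bound
\[
\operatorname{diam}(T)\le \frac{\sqrt{2\operatorname{area}(T_0)}}{\sin m_n}\,q_n^{k/2}.
\]
This bound does not depend on the particular $T\in\mathcal T_k$, so it also bounds the maximum over $\mathcal T_k$. Since $0<q_n<1$, the right-hand side tends to $0$ as $k\to\infty$.

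None of these steps is a real obstacle. The only points needing care are keeping $q_n$ positive and making sure the angle hypothesis of both lemmas is applied at every generation, which Theorem~\ref{thm:main} supplies uniformly.
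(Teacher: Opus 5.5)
Your proposal is correct and follows the paper's proof: you use Theorem~\ref{thm:main} to apply Lemma~\ref{lm:area-ratios} with $m=m_n$ at every generation, which gives $\operatorname{area}(T)\le q_n^k\operatorname{area}(T_0)$, and then Lemma~\ref{lm:area-controls} turns this into the diameter bound. Your explicit check that $0<q_n<1$, which the paper leaves to the lemma's statement, and your written-out induction are welcome additions but do not change the argument.
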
 \begin{proof} By the minimum angle estimate from Theorem~\ref{th:LAB-n-regularity}, every descendant triangle has all angles at least \(m_n\). Therefore the explicit area contraction Lemma~\ref{lm:area-ratios} applies at every refinement step with \(m=m_n\). Hence every \(T\in\mathcal T_k\) satisfies \[ \operatorname{area}(T) \leq q_n^k \operatorname{area}(T_0). \] On the other hand, since all angles of \(T\) are at least \(m_n\), the area-diameter Lemma~\ref{lm:area-controls} gives \[ \operatorname{area}(T) \geq \frac12\,\operatorname{diam}(T)^2\sin^2 m_n. \] Combining the two inequalities yields \[ \frac12\,\operatorname{diam}(T)^2\sin^2 m_n \leq q_n^k \operatorname{area}(T_0). \] Therefore \[ \operatorname{diam}(T) \leq \frac{\sqrt{2\,\operatorname{area}(T_0)}}{\sin m_n} q_n^{k/2}. \] Since \(q_n<1\), the right-hand side tends to zero as \(k\to\infty\). \end{proof} \begin{remark} The argument does not require the longest edge of every child to be strictly shorter than the longest edge of its parent. Such a one-step statement can fail, for instance when a child inherits one side of the parent triangle. Instead, the proof uses an explicit uniform area contraction together with the minimum angle condition. \end{remark}

\section{Conclusion}\label{sec:conclusion}

The longest-edge $n$-section and the largest-angle $n$-section algorithms behave differently for large values of $n$.  The known theory of the longest-edge mesh refinements shows that bisection and trisection produce nondegenerate triangular partitions, but the longest-edge $n$-section, for $n\geq 4$,  always produce sequences of degenerating triangles \cite{RosenbergStenger1975,PlazaSuarezPadronFalcon2010,SuarezMorenoAbadPlaza2012,PerdomoPlaza2012,KorotovPlazaSuarez2015}.  In contrast, the largest-angle $n$-section remains regular for every $n\geq 2$.

The reason for such a difference in the performance of the algorithms is essentially based on the following observation.  The angle divided by the largest-angle $n$-section is always at least $\pi/3$. Therefore, each newly created sector angle is at least $\pi/(3n)$.  This gives the uniform lower bound (and also the uniform upper bound) for every descendant angle in terms of the initial smallest angle and $n$.  Together with a compactness-based area contraction argument, this also implies that the recursive largest-angle $n$-sections produce descendants whose all the diameters tend to zero.

Thus, the degeneration observed for the longest-edge $n$-section when $n\geq 4$ is not an inevitable consequence of dividing a triangle into many subtriangles.  It is tied to the longest-edge selection rule.  The largest-angle selection rule avoids this particular instability and provides a regular $n$-section refinement procedure for all $n\geq 2$.

\end{document}